\documentclass[11pt,a4paper]{article}
\usepackage[margin=1in]{geometry}
\usepackage{amsmath,amssymb,amsthm}
\usepackage[hidelinks]{hyperref}
\usepackage{xcolor} % Added for highlighting

\newtheorem{theorem}{Theorem}[section]

\theoremstyle{remark}
\newtheorem{remark}[theorem]{Remark}
\theoremstyle{definition}

\newcommand{\Def}{\mathrm{Def}}
\newcommand{\Ric}{\mathrm{Ric}}
\newcommand{\divg}{\mathrm{div}}

\newcommand{\Lie}{\mathcal{L}}
\newcommand{\R}{\mathbb{R}}
\newcommand{\HH}{\mathbb{H}}
\newcommand{\Sph}{\mathbb{S}}

\title{Resolving the viscosity operator ambiguity on\\
Riemannian manifolds via a kinematic selection principle}
\author{{Zhi-Wei Wang}${}^{1,*}$ and Samuel L.\ Braunstein${}^{2,\dagger}$\\[6pt]
\small ${}^{1}$College of Physics, Jilin University, Changchun 130012, China\\
\small ${}^{2}$Computer Science, University of York, York YO10 5DD, UK\\[6pt]
\small $^*$Corresponding author e-mail: {zhiweiwang.phy@gmail.com} \\
\small $^\dagger$E-mail: {sam.braunstein@york.ac.uk}}
\date{\today}

\begin{document}
\maketitle

\begin{abstract}
On a general Riemannian manifold the Navier-Stokes equations admit
several inequivalent formulations, differing in the choice of viscous
operator: the Hodge Laplacian, the Bochner Laplacian, or the deformation
Laplacian. We show that a Lagrangian kinematic construction, in which
the strain rate is built from the rate of change of inner products of
Lie-dragged connecting vectors, uniquely selects the deformation
Laplacian for fluids whose configuration space is intrinsically the
manifold. The Hodge Laplacian is excluded at the kinematic step (before
any constitutive assumptions are introduced) because the strain rate
constructed from inner-product geometry is necessarily symmetric and has
no antisymmetric part. We further show that when the fluid arises as a
thin-shell limit of an ambient three-dimensional flow, the operator that
emerges depends on the boundary condition imposed in the normal
direction: stress-free (Navier slip) conditions recover the deformation
Laplacian, while Hodge boundary conditions recover the Hodge Laplacian,
via an explicit decomposition of the ambient Bochner Laplacian into
intrinsic and extrinsic pieces. The intrinsic piece is the deformation
Laplacian regardless of the boundary condition. As an analytical
confirmation, we show that the kinematic selection is consistent with
the known failure of the energy inequality for the Hodge Laplacian on
the hyperbolic plane $\HH^2$: the deformation Laplacian is coercive on
$\HH^2$ while the Hodge Laplacian is not, because the Ricci term has the
opposite sign in the two operators. 
We further prove that on any complete two-dimensional manifold with
Gaussian curvature bounded above by a negative constant, the
incompressible Navier-Stokes equation with the deformation Laplacian
admits a unique global weak solution with exponential energy decay,
directly resolving the analytical obstruction that prevents the
corresponding result for the Hodge Laplacian.
\end{abstract}

%%%%%%%%%%%%%%%%%%%%%%%%%%%%%%%%%%%%%%%%%%%%%%%%%%
\section{Introduction}
\label{sec:intro}
%%%%%%%%%%%%%%%%%%%%%%%%%%%%%%%%%%%%%%%%%%%%%%%%%%

What should the Navier-Stokes equations look like on a Riemannian
manifold? This question, as Czubak~\cite{Czubak2024} has recently
surveyed, turns out to be surprisingly difficult. On flat Euclidean
space $\R^n$, the incompressible Navier-Stokes equations take the
familiar form
\begin{equation}\label{eq:NS-flat}
\partial_t u + \nabla_u u + \nabla p = \mu\,\Delta u, \qquad \divg\,u = 0,
\end{equation}
where $u$ is the velocity field, $p$ the pressure, $\mu$ the dynamic
viscosity, and $\Delta = \sum_i \partial_i^2$ is the component-wise
Laplacian. The difficulty is that this component-wise Laplacian is
special to Cartesian coordinates. On a general Riemannian manifold
$(M,g)$, there are at least three natural candidates for a Laplacian
acting on vector fields (or, equivalently, on 1-forms), and they do not
agree when the curvature is non-zero.

The three candidates are the Bochner Laplacian $\Delta_B =
\nabla^i\nabla_i$ (the trace of the second covariant derivative), the
Hodge Laplacian $\Delta_H = -(\mathrm{d}\,\mathrm{d}^* +
\mathrm{d}^*\mathrm{d})$, and the deformation Laplacian $\Delta_\Def =
\Delta_B + \Ric$ (on divergence-free fields). These are related by the
Weitzenb\"ock formula
\begin{equation}\label{eq:weitz}
\Delta_H = \Delta_B - \Ric,
\end{equation}
so on flat space, where $\Ric = 0$, all three agree. On a curved
manifold they differ by multiples of the Ricci curvature, and the
question of which one gives the ``right'' Navier-Stokes equation is a
question about the physical origin of viscosity.

The question has a long history. Ebin and Marsden~\cite{EM70}, building
on Arnold's variational formulation of ideal fluid flow~\cite{Arn66},
used the Hodge Laplacian in their seminal 1970 study of incompressible
flow on manifolds, but added a note in proof stating that the
deformation Laplacian $2\,\divg\,\Def$ is the physically correct
choice. Taylor~\cite{Tay92,TayPDE3} reached the same conclusion via
the continuum-mechanics framework of Serrin~\cite{Ser59}. Despite
these endorsements, the Hodge Laplacian has remained widely used,
particularly on compact manifolds where the Ricci correction might
appear benign.

As Czubak~\cite{Czubak2024} documents, the situation is more tangled
than it first appears. Arguments from continuum mechanics
(Taylor~\cite{Tay92}, Serrin~\cite{Ser59}) point to the deformation
tensor, but rely on phenomenological assumptions (isotropy, spatial
homogeneity) whose status on a general manifold is unclear. Arguments
from analysis show that the Hodge Laplacian fails to give a good energy
inequality on the hyperbolic plane $\HH^2$~\cite{CCD17}, but this
obstruction does not apply on positively curved manifolds. Geometric
approaches via Gauss formulas~\cite{CC22,CCY23} reveal that projecting
the ambient Laplacian onto a submanifold yields an operator that depends
on the choice of extension of the vector field into the ambient space.
And thin-shell limits~\cite{TZ97,Miu20} give different answers depending
on the boundary condition imposed. Probabilistic approaches via
stochastic Lagrangian paths~\cite{AC12,ACF18,Fan20} recover the
Navier-Stokes equation with either the Hodge or deformation Laplacian
depending on the stochastic formulation, but do not by themselves select
between the two. Czubak concludes that ``the exact form of the equations
will depend on the physical problem at hand.''

In this paper we offer a different perspective. We argue that there is,
in fact, a canonical choice of viscous operator for what one might call
an \emph{intrinsic} fluid on $(M,g)$: one whose strain rate is
determined by the intrinsic geometry of the manifold and the kinematics
of the flow, without reference to any ambient space. The argument
proceeds in the Lagrangian frame, following individual fluid particles
and measuring how the distances between them change over time. The key
observation is that the rate-of-strain tensor constructed this way is
forced, by the geometry of inner products, to be the Lie derivative of
the metric $\Lie_u g = \nabla_a u_b + \nabla_b u_a$. This is twice the
deformation tensor. Its symmetry is not a constitutive assumption but a
consequence of the construction: one is differentiating a symmetric
inner product $g(s,s')$, and no antisymmetric piece can appear. From
this starting point, the standard isotropic Hookean closure and momentum
balance lead uniquely to the deformation Laplacian.

We then address the thin-shell case separately, showing by an explicit
decomposition of the ambient 3D Bochner Laplacian on a spherical shell
that the operator splits into an intrinsic kinematic part (the
deformation Laplacian) plus a radial boundary-shear term whose value is
determined entirely by the boundary condition. Stress-free (Navier slip)
conditions set this term to zero, recovering the deformation Laplacian;
Hodge boundary conditions produce a term equal to $-2\,\Ric^{(2)}$,
which shifts the operator to the Hodge Laplacian via the Weitzenb\"ock
identity.

Finally, we note that the kinematic selection is consistent with, and in
fact explains, the analytical obstructions found in~\cite{CCD17}: the
deformation Laplacian is coercive on $\HH^2$ (negative Ricci curvature
\emph{reinforces} dissipation), while the Hodge Laplacian is not
(negative Ricci curvature \emph{fights} dissipation). The
asymptotic-behaviour results of Samavaki and Tuomela~\cite{ST20} on
compact manifolds are likewise naturally explained: the kernel of the
deformation tensor consists of Killing fields, so viscous dissipation
drives the flow toward rigid motions, which is the physically expected
behaviour.

The paper is organised as follows. In Section~\ref{sec:prelim} we fix
notation and recall the three Laplacians and the Weitzenb\"ock formula.
In Section~\ref{sec:flat} we present the Lagrangian construction in flat
space, which serves both as a pedagogical warmup and as a self-contained
derivation of the Navier-Stokes equations. In Section~\ref{sec:curved}
we carry the construction to a general Riemannian manifold, identifying
the strain rate with $\Lie_u g$ and deriving the deformation Laplacian.
Section~\ref{sec:shell} treats the thin-shell case.
Section~\ref{sec:analysis} discusses the analytical consequences:
coercivity on negatively curved manifolds, Korn's inequality, and
long-time behaviour. Section~\ref{sec:discussion} places the results in
context and discusses extensions to the relativistic setting.
Appendix~\ref{app:tensor} provides a detailed tensor verification of the
thin-shell decomposition.

%%%%%%%%%%%%%%%%%%%%%%%%%%%%%%%%%%%%%%%%%%%%%%%%%%
\section{Preliminaries: Laplacians on vector fields}
\label{sec:prelim}
%%%%%%%%%%%%%%%%%%%%%%%%%%%%%%%%%%%%%%%%%%%%%%%%%%

Let $(M,g)$ be a smooth Riemannian manifold of dimension $d$, and let
$\nabla$ denote its Levi-Civita connection. We consider time-dependent
vector fields $u$ on $M$ satisfying the incompressibility constraint
$\divg\,u \equiv \nabla_a u^a = 0$.

The \emph{Bochner Laplacian} (also called the rough Laplacian or
trace Laplacian) acts on vector fields by
\begin{equation}\label{eq:bochner}
(\Delta_B u)^a = g^{ij}\nabla_i\nabla_j u^a.
\end{equation}
This is the trace of the second covariant derivative. It is the most
direct generalisation of the component-wise Laplacian, and it reduces to
$\sum_i \partial_i^2 u^a$ in Cartesian coordinates on $\R^d$. On a
compact manifold without boundary, $\Delta_B$ is a negative-semidefinite
operator: $\langle -\Delta_B u, u\rangle = \|\nabla u\|^2 \geq 0$.

The \emph{Hodge Laplacian} acts on differential forms. Identifying
vector fields with 1-forms via the metric (lowering the index), we write
\begin{equation}\label{eq:hodge}
\Delta_H = -(\mathrm{d}\,\mathrm{d}^* + \mathrm{d}^*\mathrm{d}),
\end{equation}
where $\mathrm{d}$ is the exterior derivative and $\mathrm{d}^*$ its
formal adjoint. On divergence-free vector fields, $\mathrm{d}^* u^\flat
= -\divg\,u = 0$, so $\Delta_H$ reduces to $-\mathrm{d}^*\mathrm{d}$. On
$\R^3$, this is the familiar $\nabla(\divg) -
\mathrm{curl}\,\mathrm{curl}$ identity.

The two Laplacians are related by the \emph{Weitzenb\"ock formula}:
\begin{equation}\label{eq:weitz2}
\Delta_H = \Delta_B - \Ric,
\end{equation}
where $\Ric$ denotes the Ricci operator (the Ricci tensor with one index
raised). On flat space $\Ric = 0$ and the two agree. On the round sphere
$\Sph^d$ of radius $R$, $\Ric = (d-1)/R^2$; on the hyperbolic space
$\HH^d$, $\Ric = -(d-1)/R^2$.

The \emph{deformation tensor} of a vector field $u$ is the symmetric
part of its covariant derivative:
\begin{equation}\label{eq:def}
(\Def\,u)_{ab} = \tfrac{1}{2}(\nabla_a u_b + \nabla_b u_a).
\end{equation}
The \emph{deformation Laplacian} is defined by
\begin{equation}\label{eq:defLap}
\Delta_\Def u = 2\,\divg\,\Def(u).
\end{equation}
A direct computation using the Ricci commutator identity
$\nabla^a\nabla_b u_a = \nabla_b(\divg\,u) + R_b{}^c u_c$ shows that,
on divergence-free vector fields,
\begin{equation}\label{eq:defLap2}
\Delta_\Def u = \Delta_B u + \Ric\,u.
\end{equation}
This connects all three operators. On divergence-free fields:
\begin{equation}\label{eq:triangle}
\Delta_\Def = \Delta_B + \Ric = \Delta_H + 2\,\Ric.
\end{equation}

%%%%%%%%%%%%%%%%%%%%%%%%%%%%%%%%%%%%%%%%%%%%%%%%%%
\section{The Lagrangian construction in flat space}
\label{sec:flat}
%%%%%%%%%%%%%%%%%%%%%%%%%%%%%%%%%%%%%%%%%%%%%%%%%%

Before turning to curved manifolds, it is instructive to carry out the
construction in the most transparent setting: flat Euclidean space
$\R^d$. This section serves as both a pedagogical warmup and a
self-contained derivation of the Navier-Stokes equations from the
Lagrangian point of view. For an earlier treatment of the flat-space
Lagrangian derivation with a focus on the stress tensor, see
Deissler~\cite{Dei76}.

\subsection{Following particles}

The Eulerian description of a fluid specifies the velocity $u(t,x)$,
pressure $p(t,x)$, and density $\rho(t,x)$ at each fixed spatial point
$x$. The Lagrangian description instead follows individual fluid
particles (see, e.g., Batchelor~\cite{Bat99} for a thorough treatment).
If a particle is at position $x_0$ at a fiducial time $t_0$, its
position at a later time $t$ is
\begin{equation}\label{eq:flow}
x = x(t,x_0),\qquad x(t_0,x_0) = x_0.
\end{equation}
The velocity of the fluid at position $x(t,x_0)$ is just the velocity
of the particle:
\begin{equation}\label{eq:vel}
u(t,x)\big|_{x=x(t,x_0)} = \frac{\partial x(t,x_0)}{\partial t}.
\end{equation}
Any Lagrangian quantity $\Theta(t,x_0)$ that depends locally on the
particle position can be rewritten as an Eulerian quantity
$\theta(t,x)$ via $\Theta(t,x_0) = \theta(t,x(t,x_0))$. Differentiating
in time gives the convective (material) derivative:
\begin{equation}\label{eq:convective}
\frac{\partial\Theta}{\partial t} = \frac{D\theta}{Dt}
\bigg|_{x=x(t,x_0)},\qquad \frac{D}{Dt} = \partial_t + u\cdot\nabla.
\end{equation}

A key advantage of the Lagrangian frame is that mass conservation
becomes trivial. An elementary volume $\mathrm{d}^d x_0$ of fluid,
carrying mass $\rho(t_0,x_0)\,\mathrm{d}^d x_0$, flows to a new volume
$\mathrm{d}^d x = J(t,x_0)\,\mathrm{d}^d x_0$ at time $t$, where $J$ is
the determinant of the Jacobian $\partial x/\partial x_0$. Since the
same particles occupy both volumes, $\rho(t,x)\,\mathrm{d}^d x$ is
conserved along the flow. It follows that the rate of change of linear
momentum of a fluid element is simply
\begin{equation}\label{eq:momentum}
\frac{D[\rho\,\mathrm{d}^d x\, u]}{Dt} = \rho\,\frac{Du}{Dt}\,\mathrm{d}^d x,
\end{equation}
where $\rho\,\mathrm{d}^d x$ comes out of the derivative precisely
because it is conserved. This is the standard rate-of-change-of-momentum
term in the Navier-Stokes equations; the Lagrangian formulation makes
clear why the density does not appear under the material derivative.

\subsection{Connecting vectors and the strain rate}

Now consider a pair of nearby particles, initially at positions
$x_0$ and $x_0 + s_0$. Their separation vector at time $t$ is
\begin{equation}\label{eq:sep}
s(t,x_0,s_0) = x(t,x_0+s_0) - x(t,x_0).
\end{equation}
Differentiating in time and expanding to first order in $s$:
\begin{equation}\label{eq:dsdt}
\frac{\partial s}{\partial t} = u(t,x+s) - u(t,x)
= (s\cdot\nabla)u(t,x)\big|_{x=x(t,x_0)} + O(s^2).
\end{equation}
This is the equation of geodesic deviation for the flow: nearby
particles separate (or converge) at a rate determined by the
velocity gradient.

Now consider two such separation vectors $s$ and $s'$ emanating from a
common particle. The rate of change of their inner product is
\begin{equation}\label{eq:inner}
\frac{\partial}{\partial t}(s'\cdot s)
= s'\cdot(\nabla\otimes u)\cdot s
+ s\cdot(\nabla\otimes u)\cdot s' + O(s^2 s', s'^2 s).
\end{equation}
In components:
\begin{equation}\label{eq:strain-flat}
\frac{\partial}{\partial t}(s'_a\,\delta_{ab}\,s_b)
= s'_a\,(\partial_a u_b + \partial_b u_a)\,s_b
\equiv s'_a\,\mathfrak{g}_{ab}\,s_b.
\end{equation}
The tensor $\mathfrak{g}_{ab} = \partial_a u_b + \partial_b u_a$ is the
\emph{rate-of-strain tensor}. It encodes all the information about how
local distances and angles between fluid particles change in time. Two
features are manifest. First, $\mathfrak{g}$ is symmetric, because the
inner product $s'\cdot s$ is symmetric in its arguments. Second,
$\mathfrak{g}$ involves only the \emph{symmetric} part of the velocity
gradient; the antisymmetric part $\partial_a u_b - \partial_b u_a$ (the
vorticity) measures local rotation and does not contribute to changes in
inter-particle distances.

\subsection{Constitutive closure}

The physical forces on a fluid element due to its changing strain are
given by the associated stress, through a constitutive relation that
connects the strain rate to the stress. For a \emph{simple} (Newtonian)
fluid, we assume a linear relationship: instantaneous strain rate
produces a linearly proportional instantaneous stress. For an
\emph{isotropic} fluid (one that looks the same in every direction), the
symmetric rate-of-strain tensor $\mathfrak{g}$ decomposes under the
orthogonal group into two irreducible parts: a pure trace (corresponding
to bulk volume changes) and a trace-free symmetric part (corresponding
to pure shear). There can therefore be at most two independent viscosity
coefficients: the bulk viscosity $\zeta$ and the shear viscosity $\mu$.
The resulting stress tensor is
\begin{equation}\label{eq:stress-flat}
\tau_{ab} = \zeta\,(\partial_k u_k)\,\delta_{ab}
+ \mu\left(\partial_a u_b + \partial_b u_a
- \tfrac{2}{d}\,\partial_k u_k\,\delta_{ab}\right).
\end{equation}

\subsection{Momentum balance}

The force on a fluid element $\mathrm{d}^d x$ due to pressure $p$ and
viscous stress $\tau$ is $\nabla\cdot(-p\,\mathbb{I} +
\tau)\,\mathrm{d}^d x$, by Gauss's theorem. Equating this with the rate
of change of momentum from~\eqref{eq:momentum}, and specialising to an
incompressible fluid ($\divg\,u = 0$), gives
\begin{equation}\label{eq:NS-flat2}
\rho\,\frac{Du}{Dt} = -\nabla p + \mu\,\Delta u,
\end{equation}
which is the incompressible Navier-Stokes equation~\eqref{eq:NS-flat}.
The Laplacian $\Delta = \sum_i \partial_i^2$ arises from the divergence
of the stress tensor~\eqref{eq:stress-flat} when the fluid is
incompressible.

The entire chain of reasoning is: (i)~follow particles, (ii)~measure how
their separations change (obtaining the symmetric strain rate),
(iii)~close with isotropic Hooke's law, (iv)~take the divergence to get
the force. Each step has a unique outcome, and the result is the
standard equation. The question is what happens when this chain is
repeated on a curved manifold, where the three candidate Laplacians no longer agree.

%%%%%%%%%%%%%%%%%%%%%%%%%%%%%%%%%%%%%%%%%%%%%%%%%%
\section{The kinematic argument on a Riemannian manifold}
\label{sec:curved}
%%%%%%%%%%%%%%%%%%%%%%%%%%%%%%%%%%%%%%%%%%%%%%%%%%

We now carry the Lagrangian construction of Section~\ref{sec:flat} to a
general Riemannian manifold $(M,g)$. Every step has a unique
curved-space replacement; the only new ingredient is the Ricci
curvature, which enters at the very last step when we take the
divergence of the stress.

\subsection{Connecting vectors and Lie transport}

Let $u(t,\cdot)$ be a time-dependent vector field on $M$ generating a
flow $\phi_t$, so that a particle initially at $x_0$ is at $\phi_t(x_0)$
at time $t$. Take a smooth one-parameter family of initial points
$x_0(\tau)$ with $\frac{dx_0}{d\tau}\big|_{\tau=0} = s_0 \in T_{x_0}M$,
and follow each particle along the flow. The connecting vector at time
$t$ is
\begin{equation}\label{eq:connecting}
s(t) = \frac{\partial \phi_t(x_0(\tau))}{\partial\tau}\bigg|_{\tau=0}
\in T_{\phi_t(x_0)}M.
\end{equation}
Here $\tau$ labels a continuous family of distinct fluid particles
whose initial positions form a curve through $x_0$; the derivative
with respect to $\tau$ at $\tau = 0$ extracts the infinitesimal
separation between neighbouring particles in this family, which is
the manifold analogue of the finite separation vector
$s = x(t,x_0+s_0) - x(t,x_0)$ used in Section~\ref{sec:flat}.
Because $\partial_t$ and $\partial_\tau$ commute on the parameter
rectangle, the connecting vector is \emph{Lie-dragged} by
$u$: $[u,s] = 0$. Using the torsion-free property of the
Levi-Civita connection, this gives
\begin{equation}\label{eq:lie-drag}
\nabla_u s = \nabla_s u.
\end{equation}
This is the curved-space replacement of $\partial_t s = (s\cdot\nabla)u$
from Section~\ref{sec:flat}. The replacement is unique: on a Riemannian
manifold with the Levi-Civita connection, the commutator $[u,s] =
\nabla_u s - \nabla_s u$ (torsion-free), and the Lie-drag condition
$[u,s]=0$ forces~\eqref{eq:lie-drag}.

\subsection{Strain rate as the Lie derivative of the metric}

Let $s$ and $s'$ be two such Lie-dragged connecting vectors emanating
from the same trajectory. Their inner product $g(s,s')$ is a scalar
along the trajectory, and its Lagrangian time derivative is
$u(g(s,s'))$. By metric compatibility ($\nabla g = 0$):
\begin{equation}\label{eq:inner-curved}
\frac{d}{dt}\,g(s,s') = g(\nabla_u s, s') + g(s, \nabla_u s').
\end{equation}
Using the Lie-drag relation $\nabla_u s = \nabla_s u$ and its
primed counterpart:
\begin{equation}\label{eq:strain-curved}
\frac{d}{dt}\,g(s,s') = g(\nabla_s u, s') + g(s, \nabla_{s'} u)
= (\nabla_a u_b + \nabla_b u_a)\,s^a(s')^b = (\Lie_u g)_{ab}\,s^a(s')^b.
\end{equation}
The rate-of-strain tensor on $(M,g)$ is therefore
\begin{equation}\label{eq:strain-Lie}
\mathfrak{g}_{ab} = (\Lie_u g)_{ab} = \nabla_a u_b + \nabla_b u_a
= 2\,(\Def\,u)_{ab}.
\end{equation}

The use of $\mathcal{L}_u g$ as the deformation tensor for viscous
fluids on manifolds goes back to Ebin and
Marsden~\cite{EM70}. The step from
``$\mathcal{L}_u g$ is the strain rate'' to ``the viscous operator
is the deformation Laplacian via isotropic Hookean closure'' is
the contribution of the present paper.

Serrin postulated the deformation tensor (the symmetric part of the
velocity gradient) as the basis for the viscous stress. The
divergence of the resulting stress gives the deformation Laplacian
on any manifold. The Lie derivative $\mathcal{L}_u g$ is the
complete strain rate, not an alternative postulate: $g(s, s')$ is
the full inner product on material connecting vectors $s, s'$
Lie-dragged by the flow, and its time derivative
$(\mathcal{L}_u g)(s, s')$ captures all information about how
lengths and angles of material elements change. Our derivation
provides the kinematic justification for Serrin's postulate: the
deformation tensor is the unique tensor measuring the rate of change
of material inner products, derivable from Lagrangian kinematics
without constitutive axioms.

Two features of this result deserve emphasis, because they are the core
of the selection argument.

\emph{Symmetry is forced.} The tensor $\mathfrak{g}$ is symmetric
because $g(s,s')$ is symmetric in its arguments. No antisymmetric piece
(such as the vorticity 2-form $\mathrm{d}u^\flat$, whose components are
$\nabla_a u_b - \nabla_b u_a$) can appear in the strain rate constructed
from inner products. This is not a constitutive assumption: it is a
consequence of the geometry of the construction. This contrasts with the
Hodge Laplacian, which on divergence-free fields reduces to
$-\mathrm{d}^*\mathrm{d}$, where $\mathrm{d}$ produces the antisymmetric
part of $\nabla u$. That antisymmetric part is vorticity, which is a
kinematic invariant of the flow but is not a strain rate.

\emph{Intrinsicality is automatic.} The construction uses only the
metric $g$, the Levi-Civita connection, and the flow. No embedding,
extension, or ambient space is involved. This is in contrast with the
Gauss-formula approach~\cite{CC22,CCY23}, where the projected Laplacian
depends on how the vector field is extended off the submanifold.

\subsection{Constitutive closure and the deformation Laplacian}

The constitutive step is identical to the flat-space case. Linearity and
isotropy give two viscosity coefficients (bulk $\zeta$ and shear $\mu$),
and the stress tensor is
\begin{equation}\label{eq:stress-curved}
\tau_{ab} = \zeta\,(\nabla\!\cdot\!u)\,g_{ab} + \mu\left[\nabla_a u_b
+ \nabla_b u_a - \tfrac{2}{d}\,(\nabla\!\cdot\!u)\,g_{ab}\right].
\end{equation}
Note that $g_{ab}$ replaces $\delta_{ab}$; this is the only change
from~\eqref{eq:stress-flat}. Momentum balance gives $\rho\,\frac{Du}{Dt}
= -\nabla p + \nabla^a\tau_{ab}$, where the divergence of the stress is
computed using the Levi-Civita connection.

Taking the divergence:
\begin{equation}
\nabla^a \tau_{ab} = \zeta\,\nabla_b(\nabla\!\cdot\!u)
+ \mu\left[\nabla^a\nabla_a u_b + \nabla^a\nabla_b u_a
- \tfrac{2}{d}\,\nabla_b(\nabla\!\cdot\!u)\right].
\end{equation}
The first term is the Bochner Laplacian $\Delta_B u$. For the second,
we use the standard commutator identity for vector fields:
\begin{equation}\label{eq:commutator}
\nabla^a\nabla_b u_a = \nabla_b(\nabla\!\cdot\!u) + R_b{}^c u_c,
\end{equation}
where $R_b{}^c$ is the Ricci tensor with one index raised. For
an incompressible fluid ($\nabla\!\cdot\!u = 0$), this yields
\begin{equation}\label{eq:NS-curved}
\rho\,(\partial_t u + \nabla_u u) + \nabla p
= \mu\,(\Delta_B u + \Ric\,u) = \mu\,\Delta_\Def u,
\end{equation}
which is the deformation Laplacian. The Ricci curvature enters at
exactly one place: the commutator~\eqref{eq:commutator}, which arises
when we take the divergence of the symmetrised covariant derivative.
This is the only step in the entire construction that is sensitive to
the curvature of $(M,g)$.

\begin{remark}[Material Frame-Indifference] The kinematic construction
automatically satisfies the Principle of Material Frame-Indifference:
the strain rate $\mathfrak{g} = \mathcal{L}_u g$ is invariant under
superposed rigid-body motions (isometries of $(M,g)$), because Killing
fields $\xi$ satisfy $\mathcal{L}_\xi g = 0$ by definition, so replacing
$u$ by $u + \xi$ shifts $\mathfrak{g}$ by $\mathcal{L}_\xi g = 0$. This
provides an independent confirmation that the deformation tensor, and
not the full velocity gradient, is the appropriate argument of the
constitutive law. The vorticity $\mathrm{d}u^\flat$ does not have this
property: it is not invariant under superposed rotations. \end{remark}

\begin{remark}
The kinematic identification $\mathfrak{g} = \Lie_u g$ is not new.
It appears as Proposition~3.30 and Corollary~6.12(i) in Marsden and
Hughes~\cite{MH83}, and the broader continuum-mechanics and geometric
mechanics traditions (Truesdell~\cite{Tru55}, Oldroyd~\cite{Old50},
Holm, Marsden and Ratiu~\cite{HMR98}) treat the symmetrised covariant
derivative as the natural strain rate. The isotropic Hookean
closure is standard. The Ricci commutator step is in
Taylor~\cite{Tay92,TayPDE3}. What does not appear in the literature is
the assembly of these into a selection principle that addresses the
question of~\cite{Czubak2024} directly: that the Hodge Laplacian is
excluded at the kinematic step, before any constitutive assumptions
enter, because the strain rate is forced to be symmetric by the geometry
of inner products.
\end{remark}

%%%%%%%%%%%%%%%%%%%%%%%%%%%%%%%%%%%%%%%%%%%%%%%%%%
\section{The thin-shell case}
\label{sec:shell}
%%%%%%%%%%%%%%%%%%%%%%%%%%%%%%%%%%%%%%%%%%%%%%%%%%

The kinematic argument of Section~\ref{sec:curved} applies to fluids
whose configuration space is intrinsically the manifold $(M,g)$. A
different situation arises when the fluid on $M$ is a \emph{thin-shell
limit} of a 3D fluid in an ambient space. Czubak~\cite{Czubak2024}
surveys two such constructions: Temam and Ziane~\cite{TZ97}, who impose
Hodge boundary conditions and obtain the Hodge Laplacian on $\Sph^2$;
and Miura~\cite{Miu20}, who impose Navier slip conditions and obtains
the deformation Laplacian. Both results are correct. The question is why
they disagree.

In this section we answer this question by decomposing the ambient 3D
Bochner Laplacian on a thin spherical shell into an intrinsic piece and
a radial boundary-shear piece. The intrinsic piece is the deformation
Laplacian, regardless of the boundary condition. The boundary condition
determines the radial piece, and the two standard boundary conditions
give the two textbook operators.

\subsection{Setup}

Consider the flat ambient space $\R^3$ with spherical coordinates
$(r,x^1,x^2)$, where $x^j$ are angular coordinates on $\Sph^2$. The
flat metric is
\begin{equation}
\mathrm{d}s^2 = \mathrm{d}r^2 + r^2\tilde{g}_{jk}\,
\mathrm{d}x^j\mathrm{d}x^k,
\end{equation}
where $\tilde{g}_{jk}$ is the standard round metric on the unit
$\Sph^2$. Let $U$ be a purely tangential vector field ($U^r = 0$) on
a sphere of radius $R$.

Because $\R^3$ is flat, the ambient viscous operator is the 3D Bochner
Laplacian $\Delta_B^{(3)}$. We compute its tangential components at $r =
R$ in Appendix~\ref{app:tensor}; the result is:
\begin{equation}\label{eq:decomp}
(\Delta_B^{(3)} U)^i_{\text{tan}} = \underbrace{\bigl(\Delta_B^{(2)} U^i
+ \Ric^{(2)}(U)^i\bigr)}_{\text{intrinsic: deformation Laplacian}}
+ \underbrace{\bigl(\partial_r^2 U^i
+ \tfrac{4}{R}\,\partial_r U^i\bigr)}_{\text{radial boundary shear, }
\mathcal{F}_\text{rad}}.
\end{equation}
The intrinsic piece is $\Delta_B^{(2)} + \Ric^{(2)} = \Delta_\Def^{(2)}$.
The radial piece $\mathcal{F}_\text{rad}$ depends entirely on the
boundary condition.

\subsection{Case 1: Navier slip (stress-free) boundary condition}

Zero normal shear requires the tangential deformation tensor components
to vanish at the boundary: $(\Def\,U)_{ri} = 0$. As computed in
Appendix~\ref{app:tensor}, this condition reduces to $\partial_r U^i =
0$. It follows that $\partial_r^2 U^i = 0$ and $\mathcal{F}_\text{rad} =
0$, so
\begin{equation}
(\Delta_B^{(3)} U)_\text{tan} = \Delta_B^{(2)} U + \Ric^{(2)}(U)
= \Delta_\Def^{(2)}\,U.
\end{equation}
This recovers Miura's result~\cite{Miu20} and matches the intrinsic
kinematic answer of Section~\ref{sec:curved}.

\subsection{Case 2: Hodge boundary condition}

The Hodge condition requires vanishing tangential vorticity at the
boundary: $[\mathrm{curl}\,U \times \hat{n}]_\text{tan} = 0$. This
forces the \emph{covariant} tangential components to be constant in the
radial direction: $\partial_r U_i = 0$. Because the metric expands as
$r^2$, the \emph{contravariant} components are $U^i(r) =
r^{-2}\tilde{g}^{ij}U_j$, so that
\begin{equation}
\partial_r U^i = -\frac{2}{R}\,U^i,\qquad \partial_r^2 U^i
= \frac{6}{R^2}\,U^i.
\end{equation}
Substituting into $\mathcal{F}_\text{rad}$:
\begin{equation}
\mathcal{F}_\text{rad} = \frac{6}{R^2}\,U^i
+ \frac{4}{R}\left(-\frac{2}{R}\,U^i\right)
= -\frac{2}{R^2}\,U^i = -2\,\Ric^{(2)}(U)^i,
\end{equation}
since $\Ric^{(2)} = R^{-2}$ on the round sphere. The total
tangential projection is then
\begin{equation}
(\Delta_B^{(3)} U)_\text{tan} = \Delta_B^{(2)} U + \Ric^{(2)}(U)
- 2\,\Ric^{(2)}(U) = \Delta_B^{(2)} U - \Ric^{(2)}(U) = \Delta_H^{(2)}\,U,
\end{equation}
the Hodge Laplacian via the Weitzenb\"ock identity. This recovers
Temam-Ziane~\cite{TZ97}.

\subsection{Interpretation}

The decomposition~\eqref{eq:decomp} makes the situation transparent. The
intrinsic kinematic part of the ambient viscous operator is always the
deformation Laplacian. What the boundary condition does is inject (or
not) an additional effective drag from the third dimension. The Hodge
boundary condition, by holding the covariant components fixed across the
shell, forces the fluid to shear against the expanding metric geometry;
this produces a drag of exactly $-2\,\Ric^{(2)}$, which shifts the
operator by the Weitzenb\"ock gap from the deformation to the Hodge
Laplacian. The Navier-slip condition, by allowing the contravariant
components to remain constant, imposes no such drag.

The physical picture is: a fluid that is free in the ambient direction
experiences the intrinsic (deformation) operator. A fluid that is pinned
to the ambient geometry through its boundary experiences an effective
extrinsic friction that is geometric in origin, proportional to the
Ricci curvature. The two standard choices of operator correspond to two
natural physical limits of a single family.

%%%%%%%%%%%%%%%%%%%%%%%%%%%%%%%%%%%%%%%%%%%%%%%%%%
\section{Analytical consequences}
\label{sec:analysis}
%%%%%%%%%%%%%%%%%%%%%%%%%%%%%%%%%%%%%%%%%%%%%%%%%%

The kinematic selection of the deformation Laplacian is consistent with,
and in fact explains, several analytical results in the literature.

\subsection{Coercivity on $\HH^2$}

Chan, Czubak, and Disconzi~\cite{CCD17} showed that for the
Navier-Stokes equation with the Hodge Laplacian on the hyperbolic plane
$\HH^2$, the standard $L^2$ energy inequality for weak solutions fails.
The kinematic argument explains why.

The energy dissipation rate for a divergence-free velocity field $u$ is
determined by the quadratic form $\langle -Lu, u\rangle$, where $L$ is
the viscous operator. For the two candidates:
\begin{align}
\text{Deformation:}\quad \langle -\Delta_\Def u, u\rangle
&= 2\|\Def\,u\|^2 = \|\nabla u\|^2 - \langle\Ric\,u, u\rangle,
\label{eq:coerce-def}\\
\text{Hodge:}\quad \langle -\Delta_H u, u\rangle
&= \|\mathrm{d}u^\flat\|^2 = \|\nabla u\|^2 + \langle\Ric\,u,
u\rangle.\label{eq:coerce-hodge}
\end{align}
On $\HH^2$, the Ricci curvature acts as $\Ric(u) = -u$ (for $\HH^2$ of
sectional curvature $-1$). So:
\begin{align}
\text{Deformation:}\quad \langle -\Delta_\Def u, u\rangle
&= \|\nabla u\|^2 + \|u\|^2,\\
\text{Hodge:}\quad \langle -\Delta_H u, u\rangle
&= \|\nabla u\|^2 - \|u\|^2 = \|\mathrm{d}u^\flat\|^2 \geq 0.
\end{align}
The Hodge form is positive semi-definite (being the squared norm of
a differential form), but it is not coercive: it does not control the
$H^1$ norm $\|\nabla u\|^2 + \|u\|^2$, only the weaker combination
$\|\nabla u\|^2 - \|u\|^2$. The deformation form, by contrast, controls
the full $H^1$ norm directly. For the forced nonstationary equation,
closing the energy estimate requires the dissipative term to dominate
the forcing, and the weaker Hodge dissipation is insufficient for $L^2$
data on $\HH^2$. This is the obstruction found in~\cite{CCD17}.

For the deformation Laplacian, the Ricci term \emph{reinforces}
dissipation: negative curvature makes the fluid dissipate more, not
less. The energy dissipation $\|\nabla u\|^2 + \|u\|^2$ is manifestly
positive. The energy estimate closes trivially.

The pattern is general. On any manifold with $\Ric \leq 0$, the
deformation Laplacian is strictly more dissipative than the Bochner
Laplacian, while the Hodge Laplacian is less dissipative and may fail to
be coercive. On positively curved manifolds the roles reverse, but there
compactness provides Poincar\'e inequalities that save both operators.
The kinematic argument says why: the deformation Laplacian has the
``right'' Ricci sign for a viscous operator, because viscosity responds
to the symmetric strain rate $\Lie_u g = \nabla u + \nabla u^T$, not to
the antisymmetric vorticity $\nabla u - \nabla u^T$.

\subsection{Global weak solutions on negatively curved 2D manifolds}
\label{sec:weak-solutions}

The kinematic selection of the deformation Laplacian has a profound
analytical consequence on negatively curved manifolds: it provides the
exact coercivity required to construct global weak solutions. This
directly resolves the obstruction identified by Chan, Czubak, and
Disconzi~\cite{CCD17}, who showed that the Hodge Laplacian fails to
yield an energy inequality capable of bounding the $H^1$ norm of the
velocity field on $\HH^2$. We can generalize the weak solution proof
from exactly $\HH^2$ to a broader class of complete, non-compact 2D
manifolds with strictly negative curvature bounded away from zero.
Because the negative Ricci curvature reinforces the deformation
dissipation (as shown in~\eqref{eq:coerce-def}), the energy estimate
yields a full $H^1$ bound. The system behaves analytically like the
flat-space Navier-Stokes equations equipped with a favourable
zeroth-order mass term.

Let $(M,g)$ be a complete, non-compact 2D Riemannian manifold with
bounded geometry (i.e., injectivity radius bounded below and curvature
bounded along with its derivatives). Assume its Gaussian curvature $K$
is strictly negative and bounded away from zero: $K \leq -\kappa^2 < 0$
for some constant $\kappa > 0$. Let $\mathcal{V}$ be the space of
smooth, compactly supported, divergence-free vector fields on $M$. We
define $H$ and $V$ as the closures of $\mathcal{V}$ in the $L^2(M)$ and
$H^1(M)$ norms, respectively.

\begin{theorem}\label{thm:weak-M}
Let $\mu > 0$. For any divergence-free initial velocity field
$u_0 \in H$, the incompressible Navier-Stokes equations with
the deformation Laplacian on $M$,
\begin{equation}
\partial_t u + \nabla_u u + \nabla p = \mu\,\Delta_\Def u,
\qquad \divg\,u = 0,
\end{equation}
admit a unique global weak solution $u \in L^\infty(0,T; H) \cap L^2(0,T; V)$
for any $T > 0$, together with a pressure distribution
$p \in \mathcal{D}'((0,T)\times M)$ recovering the momentum equation.
Furthermore, the velocity exhibits precise exponential energy decay:
\begin{equation}
\|u(t)\|_{L^2}^2 \leq e^{-2\mu\kappa^2 t} \|u_0\|_{L^2}^2.
\end{equation}
\end{theorem}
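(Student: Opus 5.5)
We run the standard Leray--Hopf/Faedo--Galerkin scheme in the energy space $V$, using the coercivity identity~\eqref{eq:coerce-def} as the dissipation estimate.

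\textbf{The bilinear form.} Define on $V$
\[
a(u,v) = 2\int_M \langle \Def u, \Def v\rangle\,dV .
\]
For $u\in\mathcal{V}$, the integration by parts behind~\eqref{eq:coerce-def} gives
\[
a(u,u) = \|\nabla u\|^2 - \langle \Ric\,u,u\rangle .
\]
In 2D we have $\Ric = K g$ with $K\le -\kappa^2$, so
\[
a(u,u) \ge \|\nabla u\|^2 + \kappa^2\|u\|^2 .
\]
By density this holds on all of $V$. Hence $a$ is coercive on $V$ with constant $\min(1,\kappa^2)$. It is also bounded, since $|\Def u|\le|\nabla u|$. This is the Korn-type inequality we need, and it requires no compactness.

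The weak formulation is: for all $v\in\mathcal{V}$,
\[
\frac{d}{dt}(u,v) + \mu\,a(u,v) + b(u,u,v) = 0,
\qquad b(u,w,v)=\int_M\langle\nabla_u w, v\rangle .
\]
Since $\divg u=0$, the trilinear term satisfies $b(u,v,v)=0$ on $\mathcal{V}$, and hence on $V$ by continuity.

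\textbf{Existence.} $M$ has bounded geometry, so $V$ is separable. We pick a basis $\{w_k\}\subset\mathcal{V}$ and solve the Galerkin ODEs for $u_m$. Testing with $u_m$ gives
\[
\tfrac12\frac{d}{dt}\|u_m\|^2 + \mu\,a(u_m,u_m)=0 .
\]
This yields uniform bounds in $L^\infty(0,T;H)\cap L^2(0,T;V)$. Pairing the equation with test functions and using Ladyzhenskaya's inequality
\[
\|v\|_{L^4}^2 \le C\|v\|\,\|v\|_{H^1},
\]
valid on bounded-geometry 2D manifolds, bounds $\partial_t u_m$ in $L^2(0,T;V')$. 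The non-compactness of $M$ is where I expect the main obstacle: Aubin--Lions is only available on compact pieces. I would apply it on an exhaustion by relatively compact domains $\Omega_j$ and extract a diagonal subsequence with $u_m\to u$ strongly in $L^2(0,T;L^2(\Omega_j))$ for every $j$. This suffices to pass to the limit in $b(u_m,u_m,v)$, because $v$ has compact support. The weak limits handle the linear terms.

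\textbf{Pressure.} The pressure is recovered by de Rham's theorem, in the manifold version. The distribution
\[
\partial_t u + \nabla_u u - \mu\Delta_\Def u
\]
annihilates divergence-free compactly supported test fields, so it is a gradient $\nabla p$ with $p\in\mathcal{D}'$.

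\textbf{Uniqueness and energy equality.} In 2D, Ladyzhenskaya's inequality shows that
\[
\nabla_u u\in L^2(0,T;V') .
\]
Combined with $u\in L^2(0,T;V)$, the Lions--Magenes lemma gives $u\in C([0,T];H)$ and the energy equality
\[
\tfrac12\frac{d}{dt}\|u\|^2 + \mu\,a(u,u)=0 .
\]
For two solutions $u,v$ with $w=u-v$, the same lemma gives
\[
\tfrac12\frac{d}{dt}\|w\|^2 + \mu\,a(w,w) = -b(w,v,w) \le C\|w\|\,\|w\|_{V}\,\|v\|_V .
\]
Absorbing $\|w\|_V$ into $\mu\,a(w,w)$ via coercivity and applying Grönwall with $\|v\|_V^2\in L^1(0,T)$ gives $w=0$.

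\textbf{Decay.} From the energy equality and $a(u,u)\ge\kappa^2\|u\|^2$ (dropping $\|\nabla u\|^2$),
\[
\frac{d}{dt}\|u\|^2 \le -2\mu\kappa^2\|u\|^2 .
\]
Grönwall then gives
\[
\|u(t)\|^2\le e^{-2\mu\kappa^2 t}\|u_0\|^2 .
\]
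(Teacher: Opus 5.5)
The gap is in the compactness step. That is where the noncompactness of $M$ bites, and your plan there (``apply Aubin--Lions on relatively compact $\Omega_j$'') is exactly the route the paper flags as failing.

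To run Aubin--Lions on $\Omega_j$ you need $\partial_t(u_m|_{\Omega_j})$ bounded in some space into which $L^2(\Omega_j)$ embeds \emph{injectively}. Restricting your global $V'$ bound only controls $\partial_t u_m$ against divergence-free fields compactly supported in $\Omega_j$, i.e.\ in $V_{\Omega_j}^*$. The map $L^2(\Omega_j)\to V_{\Omega_j}^*$ kills every $\nabla q$ with $q$ harmonic on $\Omega_j$. Such fields are divergence-free and do arise as restrictions of elements of $V$: locally $\nabla q=\nabla^\perp\psi$ with $\psi$ a harmonic conjugate, and $\nabla^\perp(\chi\psi)\in\mathcal V$ for a cut-off $\chi$. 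So the Ehrling inequality $\|v\|_{L^2(\Omega_j)}\le\epsilon\|v\|_{H^1(\Omega_j)}+C_\epsilon\|v\|_{V_{\Omega_j}^*}$ fails on an infinite-dimensional subspace, and no strong convergence follows. You cannot enlarge the test class to non-solenoidal fields either, since the Galerkin system carries no pressure.

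There is a related flaw. For an arbitrary basis $\{w_k\}\subset\mathcal V$, the uniform bound on $\partial_t u_m$ in $L^2(0,T;V')$ does not follow from the equation. Here $\partial_t u_m$ is the $H$-orthogonal projection onto $\mathrm{span}\{w_1,\dots,w_m\}$ of a quantity bounded in $L^2(0,T;V')$, and these projections are not uniformly bounded on $V$. On noncompact $M$ there is no Stokes eigenbasis to fall back on.

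Both problems disappear if you measure time-equicontinuity in the \emph{global} $L^2(M)$ norm, as the paper does. Since $u_m(t+h)-u_m(t)$ lies in the Galerkin span (in the paper, in $V_R$), it is an admissible test function. Write $\|u_m(t+h)-u_m(t)\|^2=\int_t^{t+h}(\partial_t u_m(s),u_m(t+h)-u_m(t))\,ds$ and use $|\mu a(u_m,v)+b(u_m,u_m,v)|\le C\|u_m\|_V\|v\|_V$, which follows from Ladyzhenskaya and the energy bound. This gives $\int_0^{T-h}\|u_m(t+h)-u_m(t)\|_{L^2(M)}^2\,dt\le Ch^{1/2}$ uniformly in $m$. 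Restrict this to $\Omega_j$ and combine it with the $L^2(0,T;H^1(\Omega_j))$ bound, Rellich, and Simon's theorem. That gives precompactness in $L^2(0,T;L^2(\Omega_j))$, and your diagonal argument then goes through. Alternatively, get $u_m\to u$ in $C([0,T];H_w)$ from equicontinuity of each $(u_m,w_k)$, and use Ehrling with the global weak metric $\sum_k 2^{-k}|(v,w_k)|$, which is a norm on $H$.

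With this repair, the rest of your proposal is sound and matches the paper's Step~1, Step~4 and uniqueness remark. That includes coercivity extended to $V$ by density, the energy equality and uniqueness via Lions--Magenes and Ladyzhenskaya, Gr\"onwall for the decay, and de Rham for the pressure. Running Galerkin directly on $M$ rather than solving Dirichlet problems on geodesic balls as the paper does is an inessential difference.
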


Chan and Czubak~\cite{CC10,CC13b} showed that Leray-Hopf weak
solutions of the Navier-Stokes equations with the Hodge Laplacian on
$\mathbb{H}^2$ are non-unique, and in~\cite{CC13b} identified a
reformulation of the weak solution concept that restores uniqueness.
Their non-uniqueness result is specific to the Hodge Laplacian; the
spectral gap of the deformation Laplacian (Bochner's vanishing
theorem on negatively curved manifolds) may prevent the analogous
phenomenon. This question deserves further investigation.

\begin{proof}
The proof proceeds via the Galerkin method and exhaustion by compact
domains (a standard framework for noncompact domains; see, e.g.,
Heywood~\cite{Hey80} for the $\R^2$ analogue), relying critically on
the exact coercivity of the deformation operator.

\emph{Step 1: Exact coercivity and exponential energy decay.} As
derived in~\eqref{eq:coerce-def}, for any vector field
$u \in \mathcal{V}$, integration by parts yields the viscous dissipation form:
\begin{equation}
\langle -\Delta_\Def u, u\rangle = 2\|\Def\,u\|_{L^2}^2
= \|\nabla u\|_{L^2}^2 - \langle\Ric\,u, u\rangle.
\end{equation}
On the 2D manifold $M$, the Ricci tensor acts as $\Ric(u) = K u$.
Since $K \leq -\kappa^2 < 0$, we obtain the strict coercivity bound:
\begin{equation}\label{eq:exact-korn}
\langle -\Delta_\Def u, u\rangle = \|\nabla u\|_{L^2}^2
- \int_M K |u|^2\,\mathrm{d}V \geq \|\nabla u\|_{L^2}^2
+ \kappa^2\|u\|_{L^2}^2 \geq \min(1, \kappa^2) \|u\|_V^2.
\end{equation}
The deformation form is therefore strictly coercive over the full $V$
norm. The Weitzenb\"ock identity acts as an exact Korn equality,
completely bypassing the need for a generic domain-dependent Korn
inequality.

Testing the momentum equation with $u$ (rigorously justified via
Steklov averages for the weak solution) gives the energy equality:
\begin{equation}
\frac{1}{2}\frac{d}{dt}\|u(t)\|_{L^2}^2 + \mu \langle
-\Delta_\Def u, u\rangle = 0.
\end{equation}
Substituting the strict coercivity bound
$\langle -\Delta_\Def u, u\rangle \geq \kappa^2 \|u\|_{L^2}^2$, we
obtain the differential inequality:
\begin{equation}
\frac{d}{dt}\|u(t)\|_{L^2}^2 \leq -2\mu\kappa^2 \|u(t)\|_{L^2}^2.
\end{equation}
Applying Gr\"onwall's lemma immediately establishes precise exponential
energy decay:
\begin{equation}
\|u(t)\|_{L^2}^2 \leq e^{-2\mu\kappa^2 t} \|u_0\|_{L^2}^2.
\end{equation}

\emph{Step 2: Exhaustion by compact domains.} Let $\{\Omega_R\}_{R>0}$
be an expanding sequence of geodesic balls exhausting $M$. For each $R$,
let $H_R$ and $V_R$ be the closures of smooth, compactly supported,
divergence-free fields on $\Omega_R$ in $L^2$ and $H^1$, respectively.
We select initial data $u_{0,R} \in H_R$ such that $u_{0,R} \to u_0$ in
$H$ and $\|u_{0,R}\|_{L^2} \leq \|u_0\|_H$. Since $\Omega_R$ is a
bounded domain with a smooth boundary, classical theory guarantees the
existence of a global weak solution $u_R \in L^\infty(0,T; H_R) \cap
L^2(0,T; V_R)$ satisfying Dirichlet boundary conditions on
$\partial\Omega_R$. 

The localized energy inequality on $\Omega_R$ reads:
\begin{equation}
\frac{1}{2}\|u_R(t)\|_{L^2(\Omega_R)}^2 + \mu \int_0^t \langle
-\Delta_\Def u_R, u_R \rangle\,\mathrm{d}s \leq
\frac{1}{2}\|u_{0,R}\|_{L^2(\Omega_R)}^2 \leq \frac{1}{2}\|u_0\|_H^2.
\end{equation}
Using the exact coercivity bound~\eqref{eq:exact-korn} on $\Omega_R$
(which holds identically because the boundary terms vanish for $u_R \in
V_R$) and extending $u_R$ by zero to all of $M$, we deduce that the
sequence $\{u_R\}$ is uniformly bounded in $L^\infty(0,T; H) \cap
L^2(0,T; V)$.

\emph{Step 3: Nonlinear bounds and local compactness.} Because $M$ is a
two-dimensional manifold with bounded geometry, the Ladyzhenskaya
inequality $\|v\|_{L^4}^2 \leq C \|v\|_{L^2} \|v\|_V$ holds
globally~\cite{Heb96,Aub98}. The uniform energy bounds then imply that
$\{u_R\}$ (extended by zero to $M$) is uniformly bounded in $L^4(0,T;
L^4(M))$. Hence, the convective term $u_R \otimes u_R$ is uniformly
bounded in $L^2(0,T; L^2(M))$.

By the Banach-Alaoglu theorem, we extract a subsequence (still denoted
$u_R$) such that $u_R \rightharpoonup u$ weakly-* in $L^\infty(0,T; H)$
and weakly in $L^2(0,T; V)$. To pass to the limit in the nonlinear term,
we require strong local convergence. A direct application of the
standard Aubin-Lions-Simon lemma via bounding $\partial_t u_R$ in
$V_W^*$ (for a local domain $W \subset M$) fails here: the natural
restriction mapping $L^2(W) \to V_W^*$ has an infinite-dimensional
kernel of square-integrable harmonic gradients, destroying the strict
injectivity required for the lemma.

Instead, we establish strong local compactness via global fractional
time-translation bounds. Let $V_R^*$ be the dual space of $V_R$.
Evaluating the momentum equation on $\Omega_R$ against any test function
in $V_R$, the uniform bounds on the viscous and convective terms show
that the time derivative satisfies $\|\partial_t u_R\|_{V_R^*} \leq
g_R(t)$, where $g_R$ is uniformly bounded in $L^2(0,T)$ independently of
$R$. Since $u_R(t+h) - u_R(t) \in V_R$, we can test the time derivative
against this difference. A standard time-translation estimate (see e.g.,
Temam~\cite{Tem01}) yields uniform equicontinuity in $L^2(M)$:
\begin{equation}
\int_0^{T-h} \|u_R(t+h) - u_R(t)\|_{L^2(M)}^2 \,\mathrm{d}t \leq C h^{1/2},
\end{equation}
where $C$ is a constant independent of $R$ and $h$.

Now let $W \subset M$ be an arbitrary compact domain. The restriction of
$\{u_R\}$ to $W$ is uniformly bounded in $L^2(0,T; H^1(W))$. By the
Rellich-Kondrachov theorem, the spatial embedding $H^1(W)
\hookrightarrow L^2(W)$ is compact. Combined with the uniform
time-equicontinuity in $L^2(W)$, Simon's compactness
theorem~\cite{Simon86} (or the Riesz-Fr\'echet-Kolmogorov theorem)
guarantees that $\{u_R|_W\}$ is strongly precompact in $L^2(0,T;
L^2(W))$. We thus extract a further diagonal subsequence converging
strongly in $L^2(0,T; L^2_{\text{loc}}(M))$.

\emph{Step 4: Passage to the limit and pressure recovery.} Let $\phi \in
C^\infty_c([0,T) \times M, TM)$ be a smooth divergence-free test
function. For sufficiently large $R$, the spatial support of $\phi$ is
contained in $\Omega_R$, and the weak formulation for $u_R$ gives:
\begin{equation}
\int_0^T \left( -\langle u_R, \partial_t \phi\rangle - \langle u_R
\otimes u_R, \nabla \phi\rangle + 2\mu\langle \Def\,u_R,
\Def\,\phi\rangle \right) \mathrm{d}t = \langle u_{0,R}, \phi(0)\rangle.
\end{equation}
The strong local convergence in $L^2$ and the uniform $L^4$ bound ensure
that $u_R \otimes u_R \rightharpoonup u \otimes u$ when tested against
smooth compactly supported functions. Thus, passing to the limit as $R
\to \infty$ is justified for all terms, demonstrating that $u$ is a
valid global weak solution on $M$.

To resolve the pressure recovery problem formally, consider the residual
distribution $F = \partial_t u + \nabla_u u - \mu \Delta_\Def u \in
\mathcal{D}'((0,T)\times M, TM)$. Since $u$ is a weak solution, $F$
annihilates any smooth test function $\phi$ that is divergence-free and
compactly supported. By De Rham's theorem on manifolds (see e.g.,
\cite{TayPDE3}), any distribution orthogonal to all compactly supported
divergence-free vector fields is exactly the gradient of a scalar
distribution. Hence, there exists a global pressure distribution $p \in
\mathcal{D}'((0,T)\times M)$ such that $F = -\nabla p$, recovering the
pressure field and the full equations.
\end{proof}

\begin{remark}
The solution constructed above is unique. The standard two-dimensional
argument \cite{Tem01} applies: if $u$ and $v$ are two weak solutions
with the same initial data, the difference $w = u - v$ satisfies an
energy inequality. By the Ladyzhenskaya and Young
inequalities, the nonlinear term can be bounded as
$$
|\langle(w\cdot\nabla)u,w\rangle|
\leq \frac{\mu\min(1,\kappa^2)}{2}\|w\|_V^2
+ C(\mu,\kappa)\|\nabla u\|_{L^2}^2\|w\|_{L^2}^2.
$$
Absorbing the first term into the dissipative term
$\mu \min(1, \kappa^2)\|w\|_V^2$ and applying Gr\"onwall's lemma to
the second yields $w = 0$.
\end{remark}

\subsection{The Korn inequality and the kernel of Def}

The natural energy functional for the deformation Laplacian is
$\|\Def\,u\|^2$, not $\|\nabla u\|^2$. The relationship between these
two norms is given by the \emph{Korn inequality}: for divergence-free
vector fields on a compact Riemannian manifold,
\begin{equation}\label{eq:korn}
\|\Def\,u\|^2 \geq C\,\|\nabla u\|^2 - K\,\|u\|^2,
\end{equation}
for constants $C > 0$ and $K \geq 0$ depending on the geometry (see
Duvaut and Lions~\cite{DL76} for the classical case and Chen and
Jost~\cite{CJ02} for the Riemannian generalisation). The kernel of
$\Def$ (where $\Def\,u = 0$) consists of the \emph{Killing fields}:
vector fields whose flow is an isometry. This is exactly the space of
rigid motions of the manifold, and the Korn inequality says that the
deformation Laplacian is coercive modulo Killing fields.

Physically, this is the correct behaviour. Viscosity dissipates
differential motion (shear) but cannot dissipate rigid motions (uniform
rotation, uniform translation), because rigid motions produce no strain.
An operator that dissipated Killing fields would be unphysical: it would
remove angular momentum from a uniformly rotating fluid without any
mechanism to absorb it.

\subsection{Long-time behaviour on compact manifolds}

Samavaki and Tuomela~\cite{ST20} investigated the long-time behaviour of
solutions to the time-dependent Navier-Stokes equation on compact
manifolds. Their key finding is:

\emph{With the deformation Laplacian}, solutions converge as
$t\to\infty$ to Killing fields. On $\Sph^2$, these correspond to rigid
rotations.

\emph{With the Hodge Laplacian}, solutions converge to harmonic 1-forms.
On $\Sph^2$, there are no non-trivial harmonic 1-forms, so solutions
converge to zero.

The deformation-Laplacian behaviour is physically correct: viscous
dissipation should remove differential rotation but preserve bulk
rotation, because viscosity cannot create or destroy angular momentum. A
uniformly rotating spherical fluid should persist indefinitely, not be
brought to rest. The Hodge-Laplacian behaviour is physically incorrect:
a uniformly rotating sphere of fluid, experiencing no external torque,
should not lose its angular momentum to viscous dissipation.

The kernel of the deformation operator (Killing fields) contains the
rotational symmetries. The kernel of the Hodge operator (harmonic forms)
does not. The kinematic argument explains why: the deformation operator
responds to the strain rate $\Lie_u g$, which vanishes for Killing
fields by definition (Killing fields are exactly those whose flow
preserves the metric). The Hodge operator responds to
$\mathrm{d}u^\flat$ (the vorticity), which does not vanish for rotation.

%%%%%%%%%%%%%%%%%%%%%%%%%%%%%%%%%%%%%%%%%%%%%%%%%%
\section{Discussion}
\label{sec:discussion}
%%%%%%%%%%%%%%%%%%%%%%%%%%%%%%%%%%%%%%%%%%%%%%%%%%

We have presented a two-part account of the viscosity operator question
on Riemannian manifolds. For intrinsic fluids, the Lagrangian kinematic
construction uniquely selects the deformation Laplacian: the strain rate
is forced by the geometry of inner products to be $\Lie_u g$, leaving no
room for the antisymmetric (vorticity) contribution that would lead to
the Hodge Laplacian. For thin-shell fluids, the
decomposition~\eqref{eq:decomp} shows that the intrinsic part of the
ambient operator is always the deformation Laplacian, and the boundary
condition determines an additional extrinsic contribution. This resolves
the apparent disagreement between Temam-Ziane and Miura.

Several directions for further work are natural.

\emph{Noncompact manifolds and weak solutions.} The kinematic argument
operates at the level of smooth (classical) solutions. As demonstrated
in Section~\ref{sec:weak-solutions}, for the deformation Laplacian the
Korn inequality extends effortlessly to noncompact 2D manifolds with
strictly negative curvature, and the weak theory is completely
well-behaved. The specific obstructions on $\HH^2$ found in~\cite{CCD17}
for the Hodge Laplacian are entirely absent, a fact directly
attributable to the deformation Laplacian's exact coercivity. The
extension of this result to 3D negatively curved manifolds remains an
open analytical question that deserves further investigation.

\emph{General thin-shell geometries.} Our thin-shell decomposition is
carried out for the sphere, the simplest case. The extension to general
hypersurfaces (building on~\cite{CC22,CCY23}) would involve the second
fundamental form and mean curvature. We expect the structure to persist:
an intrinsic deformation-Laplacian piece plus an extrinsic boundary
piece determined by the normal profile of the velocity field. The
coefficients will depend on the extrinsic geometry (principal
curvatures) rather than being the simple numerical factors we found for
the sphere.

\emph{Relativistic and dynamical-spacetime extensions.} In general
relativity, the kinematic construction carries over with the
four-velocity $u^\alpha$ and the spatial metric $h_{\alpha\beta} =
g_{\alpha\beta} + u_\alpha u_\beta$. The strain rate $\Lie_u h$
decomposes into the shear tensor $\sigma_{\alpha\beta}$ (symmetric
trace-free transverse part) and the expansion scalar $\theta$ (trace),
which are exactly the ingredients first used by Eckart~\cite{Eck40} and
subsequently adopted in Israel-Stewart~\cite{IS79} and
Bemfica-Disconzi-Noronha~\cite{BDN22} theories of relativistic
dissipative hydrodynamics. The kinematic argument provides a foundation
for why these theories use $\sigma_{\alpha\beta}$ and $\theta$ as the
sources of viscous stress, rather than the vorticity
$\omega_{\alpha\beta}$: the strain rate has no antisymmetric part by
construction. Whether the kinematic argument can provide additional
constraints on the parameter space of such theories is a question we
plan to investigate.

%%%%%%%%%%%%%%%%%%%%%%%%%%%%%%%%%%%%%%%%%%%%%%%%%%
\appendix
\section{Tensor verification of the thin-shell decomposition}
\label{app:tensor}
%%%%%%%%%%%%%%%%%%%%%%%%%%%%%%%%%%%%%%%%%%%%%%%%%%

Here we verify the decomposition~\eqref{eq:decomp} by direct calculation
in spherical coordinates. The flat metric of $\R^3$ in coordinates
$(r,x^1,x^2)$ is
\begin{equation}
\mathrm{d}s^2 = \mathrm{d}r^2 + r^2\tilde{g}_{jk}\,\mathrm{d}x^j\mathrm{d}x^k,
\end{equation}
with inverse metric components $g^{rr} = 1$, $g^{jk}=r^{-2}\tilde{g}^{jk}$.
The non-vanishing Christoffel symbols are
\begin{equation}
\Gamma^r_{jk} = -r\tilde{g}_{jk},\quad \Gamma^i_{rj}
= \Gamma^i_{jr} = \frac{1}{r}\,\delta^i_j,\quad \Gamma^i_{jk}
= \tilde{\Gamma}^i_{jk},
\end{equation}
where $\tilde{\Gamma}^i_{jk}$ are the Christoffel symbols of $\Sph^2$.
Let $U$ be a purely tangential field ($U^r = 0$).

\paragraph{Radial trace.} We compute $\nabla_r U^i
= \partial_r U^i + \frac{1}{r}U^i$, then
\begin{align}
\nabla_r\nabla_r U^i &= \partial_r\!\left(\partial_r U^i
+ \frac{1}{r}U^i\right) + \frac{1}{r}\!\left(\partial_r U^i
+ \frac{1}{r}U^i\right) = \partial_r^2 U^i + \frac{2}{r}\,\partial_r U^i.
\end{align}

\paragraph{Angular trace.} We find $\nabla_k U^i = \tilde{\nabla}_k U^i$
and $\nabla_k U^r = -r\tilde{g}_{kl}U^l$. The second covariant
derivative evaluates to
\begin{align}
\nabla_j\nabla_k U^i &= \tilde{\nabla}_j\tilde{\nabla}_k U^i
- \tilde{g}_{kl}U^l\delta^i_j + r\tilde{g}_{jk}\partial_r U^i
+ \tilde{g}_{jk}U^i.
\end{align}
Contracting with $g^{jk} = r^{-2}\tilde{g}^{jk}$ and using
$\tilde{g}^{jk}\tilde{g}_{jk} = 2$ (the dimension of $\Sph^2$):
\begin{equation}
g^{jk}\nabla_j\nabla_k U^i = \Delta_B^{(2)} U^i
+ \frac{2}{r}\,\partial_r U^i + \frac{1}{r^2}\,U^i.
\end{equation}

\paragraph{Total.} Summing the radial and angular traces:
\begin{equation}
(\Delta_B^{(3)} U)^i = \Delta_B^{(2)} U^i + \frac{1}{r^2}\,U^i
+ \partial_r^2 U^i + \frac{4}{r}\,\partial_r U^i.
\end{equation}
Since $\Ric^{(2)}(U)^i = \frac{1}{r^2}U^i$ on a sphere of radius
$r$, evaluating at $r = R$ gives~\eqref{eq:decomp}.

\paragraph{Navier-slip evaluation.} The normal shear is $\sigma_{ri} =
\frac{1}{2}(\nabla_r U_i + \nabla_i U_r)$. Using $U_i =
r^2\tilde{g}_{ij}U^j$ and $U_r = 0$, this reduces to $\sigma_{ri} =
\frac{1}{2}r^2\tilde{g}_{ij}\partial_r U^j$. Zero normal shear forces
$\partial_r U^j = 0$.

\paragraph{Hodge-condition evaluation.} Vanishing tangential vorticity
requires $\partial_r U_i = 0$. Since $U_i = r^2\tilde{g}_{ij}U^j$
and $U_i$ is constant in $r$:
\begin{equation}
U^i(r) = \frac{1}{r^2}\tilde{g}^{ij}U_j\;\implies\;\partial_r U^i
= -\frac{2}{r}\,U^i,\quad \partial_r^2 U^i = \frac{6}{r^2}\,U^i.
\end{equation}
At $r = R$: $\mathcal{F}_\text{rad} = \frac{6}{R^2}U^i - \frac{8}{R^2}U^i
= -\frac{2}{R^2}U^i = -2\,\Ric^{(2)}(U)^i$.

%%%%%%%%%%%%%%%%%%%%%%%%%%%%%%%%%%%%%%%%%%%%%%%%%%
\section*{Declarations}

\textbf{Conflict of Interest:} The authors declare that they have no conflict of interest.\\
\textbf{Data Availability:} The authors did not use any data during the analysis of the paper.

%%%%%%%%%%%%%%%%%%%%%%%%%%%%%%%%%%%%%%%%%%%%%%%%%%

\end{document}